\documentclass[preprint,10pt]{elsarticle}

\usepackage{amsmath, amsthm, amssymb}
\usepackage{lipsum}
\usepackage[font=footnotesize, labelfont=bf]{caption}
\usepackage{listings}
\usepackage{gensymb}
\usepackage{fancyhdr}
\usepackage{graphicx}

\newtheorem{theorem}{Theorem}[section]

\journal{Parallel Computing}
\begin{document}
\begin{frontmatter}

\title{A Non-linear GPU Thread Map for Triangular Domains}

\author[uach]{Crist\'obal A. Navarro\corref{cor1}}
\ead{cnavarro@inf.uach.cl}
\author[dcc]{Benjam\'in Bustos}
\author[dcc]{Nancy Hitschfeld}
\address[uach]{Instituto de Inform\'atica, Universidad Austral de Chile.}
\address[dcc]{Departamento de Ciencias de la Computaci\'on, Universidad de Chile.}
\cortext[cor1]{Corresponding author}

\begin{abstract}
There is a stage in the GPU computing pipeline where a grid of thread-blocks, in
\textit{parallel space}, is mapped onto the problem domain, in \textit{data
space}.  Since the parallel space is restricted to a box type geometry, the
mapping approach is typically a $k$-dimensional bounding box (BB) that covers a
$p$-dimensional data space. Threads that fall inside the domain perform
computations while threads that fall outside are discarded at runtime.  In this
work we study the case of mapping threads efficiently onto triangular domain
problems and propose a block-space linear map $\lambda(\omega)$, based on the
properties of the lower triangular matrix, that reduces the number of
unnnecessary threads from $\mathcal{O}(n^2)$ to $\mathcal{O}(n)$.  Performance
results for global memory accesses show an improvement of up to $18\%$ with
respect to the \textit{bounding-box} approach, placing $\lambda(\omega)$ on second
place below the \textit{rectangular-box} approach and above the
\textit{recursive-partition} and \textit{upper-triangular} approaches. For
shared memory scenarios $\lambda(\omega)$ was the fastest approach achieving $7\%$ of
performance improvement while preserving thread locality. The results obtained
in this work make $\lambda(\omega)$ an interesting map for efficient GPU computing on
parallel problems that define a triangular domain with or without neighborhood
interactions. The extension to tetrahedral domains is analyzed, with
applications to triplet-interaction n-body applications. 
\end{abstract}

\begin{keyword}
block-space mapping \sep data re-organization \sep trianglular domain 
keywords here, in the form: keyword \sep keyword
\end{keyword}
\end{frontmatter}

\section{Introduction}
\label{sec_introduction}
GPU computing has become a well established research area \cite{4490127,
Nickolls:2010:GCE:1803935.1804055, navhitmat2014} since the release of
programable graphics hardware and its programming platforms such as CUDA
\cite{nvidia_cuda_guide} and OpenCL \cite{opencl08}.  In the CUDA GPU
programming model there are three constructs\footnote{OpenCL chooses different
names for these constructs; (1) work-element, (2) work-group and (3) work-space,
respectively.} that allow the execution of highly parallel algorithms; (1)
thread, (2) block and (3) grid.  Threads are the smallest elements and they are
in charge of executing the instructions of the GPU kernel. A block is an
intermediate structure that contains a set of threads organized as an Euclidean
box.  Blocks provide fast shared memory access as well as local synchronization
for all of its threads. The grid is the largest construct of all three and it
keeps all blocks together spatially organized for the execution of a
GPU kernel.  These three constructs play an important role when mapping the
execution resources to the problem domain.

For every GPU computation there is a stage where threads are mapped 
onto the problem domain. A map, defined as $f: \mathbb{R}^k \rightarrow
\mathbb{R}^p$, transforms each $k$-dimensional point $x=(x_1, x_2, ..., x_k)$ of
the grid into a unique $p$-dimensional point $f(x) = (y_1, y_2, \cdots, y_p)$ of
the problem domain. Since the grid lives in \textit{parallel space}, we
have that $f(x)$ \textit{maps the parallel space onto the data space}.  In
GPU computing, a typical mapping approach is to build a \textit{bounding-box} (BB)
type of parallel space, sufficiently large to cover the data space and map
threads using the identity $f(x) = x$. Such map is highly efficient for
the class of problems where data space is defined by a $p$-dimensional box, such
as vectors, lists, matrices and box-shaped volumes. For this reason, the
identity map is typically considered as the default approach.


There is a class of parallel problems where data space follows a triangular
organization. Problems such as the \textit{Euclidean distance maps} (EDM)
\cite{5695222, Li:2010:CME:1955604.1956601, Man:2011:GIC:2117688.2118809},
\textit{collision detection} \cite{AvrilGA12}, adjacency matrices
\cite{kepner2011graph}, \textit{cellular automata simulation on triangular
domains} \cite{ConwaysLife}, matrix inversion
\cite{Ries:2009:TMI:1654059.1654069}, \textit{LU/Cholesky decomposition}
\cite{springerlink_gustavson} and the \textit{n-body} problem
\cite{DBLP:journals/corr/abs-1108-5815, Bedorf:2012:SOG:2133856.2134140, 
Ivanov:2007:NPT:1231091.1231100}, among others, belong to this
class and are frequently encountered in the fields of science and
technology. In this class, data space has a size of $D=n(n+1)/2 \in
\mathcal{O}(n^2)$ and is organized in a triangular way. This data shape
makes the default \textit{bounding-box} (BB) approach inefficient as it
generates $n(n-1)/2 \in \mathcal{O}(n^2)$ unnesessary threads (see Figure
\ref{fig_bb_strategy}).
\begin{figure}[ht!]
\centering
\includegraphics[scale=0.18]{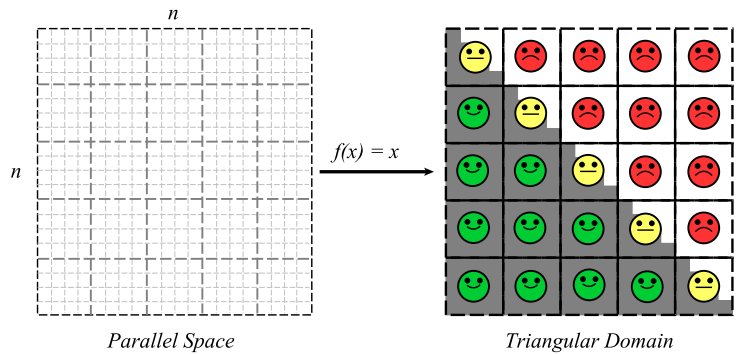}
\caption{The BB strategy is not the best choice for a \textit{td-problem}.}
\label{fig_bb_strategy}
\end{figure}

For each unnecessary thread, a set of instructions must be executed in order to
discard themselves from the useful work, leading to a performance penalty that
can become notorious considering that fine-grained parallelism usually produces
thousands of threads with a small amount of work per thread. In this context, it
is interesting to study how one can reduce the number of unnecessary threads to
a marginal number and eventually produce a performance improvement for all
problems in this class.  Throughout the paper, this class will be referred as
the \textit{triangular-domain} class or simply \textit{TD}. 

This work addresses the problem depicted in Figure \ref{fig_bb_strategy} and
proposes a block-space map $\lambda(\omega)$ designed for the TD class that reduces
the number of unnecessary threads to $o(n^2)$,
making kernel execution up to $18\%$ faster than the BB approach in global
memory scenarios and up to $7\%$ faster in shared memory tiled scenarios. The
rest of the paper includes a description of the related works (Section
\ref{sec_relatedwork}), a formal definition for $\lambda(\omega)$ (Section
\ref{sec_td_strategy}), an evaluation of square root implementations (Section
\ref{sec_implementation}), a performance comparison with the related works
(Section \ref{sec_experimental_results}) and an extension of the strategy to
tetrahedral domains with its potential performance benefit (Section \ref{sec_extension3d}). 

\section{Related Work}
\label{sec_relatedwork}
In the field of distance maps, Ying \textit{et. al.} have proposed a GPU
implementation for parallel computation of DNA sequence distances
\cite{Ying:2011:GDD:2065356.2065583} which is based on the Euclidean distance
maps (EDM). The authors mention that the problem domain is indeed
symmetric and they do realize that only the upper or lower triangular part of
the interaction matrix requires computation. Li \textit{et. al.}
\cite{Li:2010:CME:1955604.1956601} have also worked on GPU-based EDMs on large
data and have also identified the symmetry involved in the computation.
 
Jung \textit{et. al.} \cite{Jung2008} proposed packed data structures
for representing triangular and symmetric matrices with applications to LU and
Cholesky decomposition \cite{springerlink_gustavson}. The strategy is based on
building a \textit{rectangular box strategy} (RB) for accessing and storing a
triangular matrix (upper or lower). Data structures become practically half the
size with respect to classical methods based on the full matrix. The strategy
was originally intended to modify the data space (\textit{i.e.,} the matrix),
however one can apply the same concept to the parallel space.

Ries \textit{et. al.} contributed with a parallel GPU method for the
triangular matrix inversion \cite{Ries:2009:TMI:1654059.1654069}.  The authors
identify that the parallel space indeed can be improved by using a
\textit{recursive partition} (REC) of the grid, based on a \textit{divide and
conquer} strategy.

Q. Avril \textit{et. al.} proposed a GPU mapping function for collision
detection based on the properties of the \textit{upper-triangular map}
\cite{AvrilGA12}. The map, referred here as \textit{UTM}, is a thread-space function $u(x)
\rightarrow (a, b)$, where $x$ is the linear index of a thread $t_x$ and the
pair $(a,b)$ is a unique two-dimensional coordinate in the upper triangular
matrix. Their map is accurate in the domain $x \in [0, 100M]$, with a range of
$(a,b) \in [0,3000]$.

The present work is an extended and improved version of a previous conference
research by Navarro and Hitschfeld \cite{DBLP:conf/hpcc/NavarroH14}.

\section{Block-space triangular map}
\label{sec_td_strategy}
\subsection{Formulation}
\label{sec_block_mapping_function}
It is important to distinguish between two possible approaches; (1) thread-space
mapping and (2) block-space mapping.  Thread-space mapping is where each thread
uses its own unique coordinate as the parameter for the mapping function. The
approach has been used before in the work of Avril \textit{et. al.}
\cite{AvrilGA12}. 
On the other hand, block-space mapping uses the shared block coordinate to map
to a specific location, followed by a local offset on each thread according to
the relative position in their block. This approach has not been considered by
the earlier works and it has been chosen as it can give certain advantages over
thread-space mapping specially on memory access patterns.

For a problem in the TD class of linear size $n$, its total size is
$n(n+1)/2$. Let $m = \lceil n/\rho\rceil$ be the number of blocks needed to
cover the data space along one dimension and $\rho$ the number of threads per block
per dimension, or \textit{dimensional block-size} (for simplicity, we assume a
regular block).  A \textit{bounding-box} mapping approach would build a
box-shaped parallel space, namely $P_{BB}$, of $m \times m$ blocks and put
conditional instructions to cancel the computations outside the problem domain.
Although the approach is correct, it is inefficient since $m(m+1)/2$ blocks are
already sufficient when organized as:
\begin{equation}
D = 
\begin{vmatrix}
0	&		&		&				&					\\
1	&	2	&		&				&					\\
3	&	4	&	5	&				&					\\	
...	&	... 	&	...	&	...			& 					\\
\frac{m(m-1)}{2}	&	\frac{m(m-1)}{2} + 1 	&	...	&	...	& \frac{m(m+1)}{2}-1\\
\end{vmatrix}
\label{eq_matrix_indexing}
\end{equation}
We define a balanced two-dimensional parallel space $P_{\Delta}$  (see Figure
\ref{fig_ltm_strategy}) of size of $m' \times m'$ blocks
where $m' = \Big\lceil\sqrt{m(m+1)/2} \Big\rceil$. This setup reduces the number
of unnecessary threads from $n(n-1)/2 \in \mathcal{O}(n^2)$ to $\frac{\rho(\rho
- 1)}{2}\lceil n/\rho \rceil < \frac{\rho^2}{2} \lceil n/\rho \rceil \in o(n^2)$
threads, with $\rho \in O(1)$.
\begin{figure}[ht!]
\centering
\includegraphics[scale=0.18]{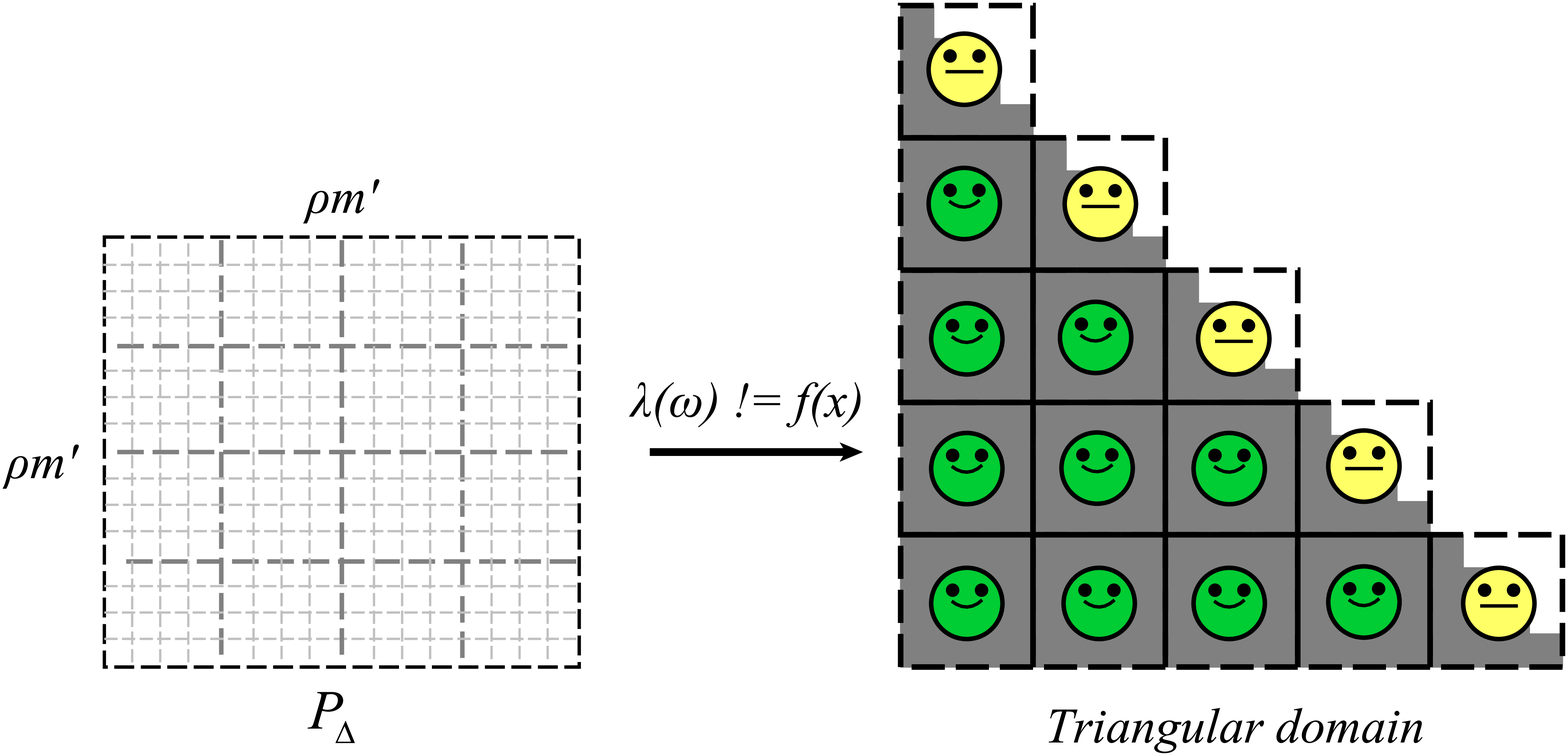}
\caption{parallel space $P_{\Delta}$ is sufficient to cover the problem domain.}
\label{fig_ltm_strategy}
\end{figure}

Spaces $P_{\Delta}$ and $D$ are both topologically equivalent, therefore the
map has to be at least a homeomorphism from block coordinates to
unique two-dimensional coordinates in the triangular data space.
\begin{theorem}
    There exists a non-linear homeomorphism $\lambda:\mathbb{Z}^1 \mapsto
    \mathbb{Z}^2$ that maps any block $B_{i,j} \in P_{\Delta}$ onto the TD
    class.
\end{theorem}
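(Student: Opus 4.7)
The plan is to construct $\lambda$ explicitly and verify that it is a bijection onto the triangular index set; since both the domain $P_{\Delta}$ and the data space $D$ are discrete lattices (subsets of $\mathbb{Z}^1$ and $\mathbb{Z}^2$ respectively, with the induced discrete topology), every bijection between them is automatically bicontinuous, so establishing bijectivity suffices to exhibit a homeomorphism. Nonlinearity will follow from the fact that the only candidate inverse involves a square root.

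First I would linearize $P_{\Delta}$: enumerate the blocks $B_{i,j} \in P_{\Delta}$ row by row in the triangular fashion indicated by equation (\ref{eq_matrix_indexing}), assigning to $B_{i,j}$ (with $0 \le j \le i$) the linear index $\omega = \frac{i(i+1)}{2} + j$. This produces a bijection between the set of valid block coordinates and the integer interval $\{0, 1, \ldots, \frac{m(m+1)}{2}-1\}$, so it is enough to describe $\lambda$ as a map $\omega \mapsto (i, j)$.

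Next, I would invert the enumeration. The row index $i$ is characterized by the sandwich condition $\frac{i(i+1)}{2} \le \omega < \frac{(i+1)(i+2)}{2}$; solving the quadratic $i^2 + i - 2\omega = 0$ and taking the floor of the positive root yields
\begin{equation*}
    i(\omega) = \left\lfloor \frac{-1 + \sqrt{1 + 8\omega}}{2} \right\rfloor,
    \qquad
    j(\omega) = \omega - \frac{i(\omega)\,(i(\omega)+1)}{2}.
\end{equation*}
I would then check in a short calculation that for every $\omega$ in the valid range the pair $(i(\omega), j(\omega))$ satisfies $0 \le j(\omega) \le i(\omega)$, so the image lies inside the lower-triangular region of $D$, and that the composition with the forward enumeration recovers $\omega$, giving bijectivity. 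Nonlinearity is evident from the presence of $\sqrt{1+8\omega}$, which cannot be expressed as an affine function of $\omega$.

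The step I expect to be most delicate is not the algebra but the precise justification that the floor of the irrational expression always selects the correct integer row, particularly near block boundaries where $1 + 8\omega$ is a perfect square; one needs to verify that the monotonicity of the triangular numbers together with the strict inequality on the upper side of the sandwich condition prevents off-by-one errors. Once this is settled, the homeomorphism claim is immediate from the discrete topology, and the explicit formulas for $i(\omega)$ and $j(\omega)$ provide the constructive witness required by the theorem.
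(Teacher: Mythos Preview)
Your proposal is correct and follows essentially the same route as the paper: both arguments linearize the triangular block layout, bound $\omega$ between consecutive triangular numbers, solve the quadratic $x^2+x-2\omega=0$, and take the floor of the positive root to recover $i$, with $j=\omega-i(i+1)/2$; your expression $\lfloor(-1+\sqrt{1+8\omega})/2\rfloor$ is algebraically identical to the paper's $\lfloor\sqrt{1/4+2\omega}-1/2\rfloor$. The only minor divergence is that the paper certifies non-linearity by an explicit counterexample to additivity rather than by appealing to the square root, and it does not spell out the discrete-topology justification for ``homeomorphism'' that you (rightly) include.
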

\begin{proof}[Proof]
    Let $\omega$ be the linear index of a block, expressed as
	\begin{equation}
	\omega = \sum_{r=1}^{x}r
	\label{eq_packed_adyacency_sum_c}
	\end{equation}
    From expression (\ref{eq_matrix_indexing}), one can note that
    the index of the first data element in the row of the $\omega$-th linear block 
    corresponds to the sum in the range $[1,i]$, where $i$ is the row for
    the $\omega$-th block.  Similarly, the index of the first block from the next row
    is a sum in the range $[1,i+1]$. Therefore, for all $\omega$ values of the
    $i$-th row, their summation range is bounded as 
	\begin{align}
	\sum_{r=1}^{i}r & \le \lambda = \sum_{r=1}^{i+\epsilon}r < \sum_{r=1}^{i+1}r
	\label{eq_packed_adyacency_sum}
	\end{align} 
    with $\epsilon < 1$. With this, we have that $x \in \mathbb{R}$ and most
    importantly that $i = \lfloor x\rfloor$. Since $\sum_{r=1}^{x}r = x(x+1)/2$, $x$
    is found by solving the second order equation $x^2 + x - 2\omega = 0$
    where the solution $x = \sqrt{1/4 + 2\omega} - 1/2$ allows the
    formulation of the homeomorphism
	\begin{equation}
        \lambda(\omega) = (i,j) = \Big(\Big\lfloor\sqrt{\frac{1}{4} + 2\omega} -
        \frac{1}{2}\Big\rfloor, \omega - i(i+1)/2\Big)
	    \label{eq_gmap}
    \end{equation}
    which is non-linear since $\exists \omega_1, \omega_2 \in P_{\Delta} :
    g(\omega_1 + \omega_2) \not= g(\omega_1) + g(\omega_2)$, \textit{e.g.,}
    $\omega_1 = 4, \omega_2 = 3 \implies g(7) = (3,1) \not= g(4) + g(3) =
    (4,1)$.
\end{proof}

If the diagonal is not needed, then $\lambda(\omega)$ becomes:
\begin{align}
\lambda(\omega) = (i,j) = \Big(\Big\lfloor\sqrt{\frac{1}{4} + 2\omega} +
\frac{1}{2}\Big\rfloor, \omega - i(i+1)/2\Big)
\label{eq_theorem_nodiag}
\end{align}

There are three important differences when comparing $\lambda(\omega)$ with the UTM
map \cite{AvrilGA12}: (1) $\lambda(\omega)$ maps in block-space and not in
thread-space as in UTM, allowing larger values of $n$, (2) thread organization
and locality is not compromised, making nearest-neighbors computations efficient
for shared memory and (3) $\lambda(\omega)$ uses fewer floating point operations than in
UTM since it uses a lower-triangular approach.

\subsection{Bounds on the improvement factor}
The reduction of unnecessary threads from $\mathcal{O}(n^2)$ to $\mathcal{O}(n)$
may suggest that the improvement could reach a factor of up to $2\times$. For
this to be possible, one would need to measure just the mapping stage, so that
necessary and unnecessary threads do similar amount of work, and assume that the mapping
function $\lambda(\omega)$ is as cheap as in the BB strategy. In the following
analysis we analyze the improvement factor considering a more realistic scenario
where $\lambda(\omega)$ has a higher cost than the BB map, due to the square root
computation involved.

The performance of $\lambda(\omega)$ strongly depends on the square root which in
theory costs $O(M(n))$ \cite{Ypma:1995:HDN:222504.222510} where $M(n)$ is the
cost of multiplying two numbers of $n$ digits.  Considering that real numbers
are represented by a finite number of digits (\textit{i.e.,} floating point
numbers with a maximum of $m$ digits), then all basic operations cost a fixed
amount of time, leading to a constant cost $M(m) = C_s \in \mathcal{O}(1)$. All
other computations are elemental arithmetic operations and can be taken as an
additional cost of $C_a \in \mathcal{O}(1)$. The total cost of $\lambda(\omega)$ is 
$\tau = C_s + C_a = \mathcal{O}(1)$ for each mapped thread.  On the other hand,
the BB strategy uses the identity map and checks for each thread if $B_j <= B_i$ 
in order to continue or be discarded, leading to a constant cost of $\beta \in
\mathcal{O}(1)$.
 It is indeed evident that $\beta$ is cheaper
than $\tau$, therefore $\tau = k\beta$ with a constant $k \ge 1$. The
improvement factor $I$ can expressed as 
\begin{equation}
I = \frac{\beta|P_{BB}|\rho^2}{\tau|P_{\Delta}|\rho^2} = \frac{2\beta
N_D^2}{\tau N_D^2 + \tau N_D} = \frac{2\beta\lceil n/\rho \rceil^2}{\tau(\lceil n/\rho
\rceil^2 + \lceil n/\rho \rceil)}
\label{eq_I}
\end{equation}
For large $n$ the result approaches to
\begin{equation}
    \lim_{n\to\infty} I = \frac{2\beta}{\tau}
\label{eq_I_bounded}
\end{equation}
Using the relation $\tau = k\beta$ in (\ref{eq_I_bounded}) we have $I \approx
2/k$ for large $n$ and since $k > 1$, the final range for $I$ becomes 
\begin{equation}
    0 < I < 2
\end{equation}
The parameter $k$ can be interpreted as the penalty factor of $\lambda(\omega)$,
where the lowest value is desired. In practice, a value $k \approx 1$ is too
optimistic. Given how actual GPU hardware works, one can expect that the cost of
the square root will dominate the $k$ parameter.

\section{Implementation}
\label{sec_implementation}
This section presents technical details on choosing a proper square root
implementation for $\lambda(\omega)$ as well as a general description of the 
related works chosen for performance comparison later on.  

\subsection{Choosing a proper square root} 
The performance of map $\lambda(\omega) = (i,j)$ depends, in great part, on how fast
the square root from eq.  (\ref{eq_gmap}) is computed.  Three versions of
$\lambda(\omega)$ have been implemented, each one using a different method for
computing the square root. Results from each test are computed as the
improvement factor with respect to the BB strategy.  The first implementation,
named $\lambda_X$, uses the default $sqrtf(x)$ function from CUDA C and it is the
simplest one.

The second implementation, named $\lambda_N$, computes the square root by
using three iterations of the Newton-Raphson method
\cite{Ypma:1995:HDN:222504.222510, Peelle:1974:TNS:585882.585889} which is 
available from the implementation of Carmack and Lomont.  This square root
implementation has proved to be effective for applications that allow small
errors.  The initial value used is the magic number ``0x5f3759df'' (this value
became known when 'Id Software' released Quake 3 source code back in the year
2005).  Adding a constant of $\epsilon = 10^{-4}$ to the
result of the square root can fix approximation errors in the range $N \in [0,
30720]$.

The third implementation, named $\lambda_R$, uses the hardware implemented
reciprocal square root, $rsqrtf(x)$:
\begin{equation}
\sqrt{x} = \frac{x}{\sqrt{x}} = x\cdot rsqrtf(x) 
\end{equation}
In terms of simplicity, $\lambda_R$ is similar to $\lambda_X$, with the only
difference that it adds $\epsilon = 10^{-4}$ at the end to fix approximation
errors, just like in $g_2$.

For each implementation a performance improvement factor was obtained with
respect to BB strategy, by running a dummy kernel that computes the $i,j$
indices and writes the sum $i+j$ to a constant location in memory. It is
necessary to perform at least one memory access otherwise the compiler can
optimize the code removing part of the mapping cost.  Figure
\ref{fig_results_map} shows the improvement factor as $I = BB/\lambda(\omega)$ using
the three different implementations, running on three different Nvidia Kepler
GPUs; GTX 680, GTX 765M and Tesla K40. 
\begin{figure*}[ht!]
\centering
\begin{tabular}{cc}
\includegraphics[scale=0.7]{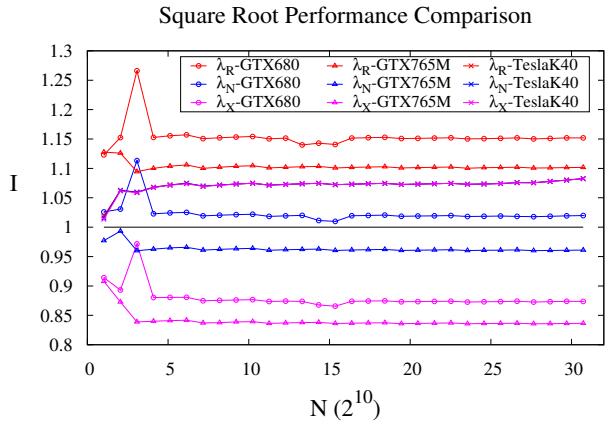}
\end{tabular}
\caption{Performance of the different square root strategies.}
\label{fig_results_map}
\end{figure*}

From the results, we observe that $\lambda_X$ is slower than BB when running
on the GTX 680 and GTX 765M, achieving $I_{680} \approx 0.87$ and $I_{765M}
\approx 0.83$, respectively.  For the same two GPUs, $\lambda_N$ achieves an
improvement of $I_{680} \approx 1.025$ and $I_{765M} \approx 0.96$ which is
practically the performance of BB. Lastly, $\lambda_R$ achieves
improvements of $I_{680} \approx 1.15$ and $I_{765M} \approx 1.1$. From these
results, we observe that using the inverse square root is the best option for
the GTX 680 and GTX 765M. For the case of the Tesla K40, we observe that all
three implementations achieve an improvement of $I_{K40} \approx 1.08$, allowing
to have the precision of $\lambda_X$ with the performance of $\lambda_R$.

\subsection{Implementing the other strategies}
The strategies from the relevant works, including the default one from CUDA,
were implemented as well; \textit{bounding-box (BB)}, \textit{rectangle-box
}(RB),\textit{recursive-partition} (REC) and \textit{upper-triangular-matrix}
(UTM) following the details provided by the authors \cite{Jung2008, AvrilGA12,
Ries:2009:TMI:1654059.1654069}.  To each implementation the following
restriction was added: the map cannot use any additional information that grows
as a function of $N$.  This means no auxiliary array such as lookup tables are
allowed, only small constants size data if needed.  The purpose of such
constraint is to guarantee that GPU memory is dedicated to the application problem.

For the \textit{bounding box} (BB) strategy, blocks above the diagonal are
discarded at runtime, without needing to compute a thread coordinate as it can
be done by checking if $B_i > B_j$ is true or not in the kernel. Threads
that got a true result from the conditional proceed to compute their global
coordinate and do the kernel work. The condition $i > j$ is still performed to
discard threads on blocks where $B_i = B_j$. It is important to note that this 
implementation of BB is faster than computing the thread coordinate first and
filtering afterwards.

The \textit{rectangular box} (RB) takes a sub-triangular portion
of the threads where $t_x > N/2$, rotates it \textit{counter-clock-wise} (CCW)
and places it above the diagonal to form a rectangular grid (see Figure
\ref{fig_other_methods}, left). 
\begin{figure}[ht!]
\centering
\includegraphics[scale=0.22]{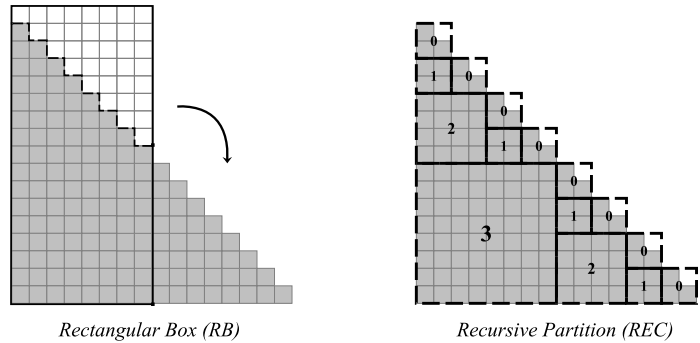}
    \caption{On the left, the \textit{rectangular-box} (RB). On the right, the
    \textit{recursive-partition} (REC).}
\label{fig_other_methods}
\end{figure}

The original work was actually a memory packing technique in data space, but the
principle can be applied to the parallel space as well.  For this, the lookup
texture is no longer used and instead the mapping coordinates are computed at
runtime.  All threads below the diagonal just need to map to $i = t_y-1$, while
$j$ remains the same, and threads in or above the diagonal map to $i = n - t_y -
1$,$j = n - i - 1$.  An important feature of the RB map is that the number of
unnecessary threads is asymptotically $\mathcal{O}(1)$.

The \textit{recursive partition} (REC) \cite{Ries:2009:TMI:1654059.1654069}
strategy was originally proposed for matrix inversion problem. In this strategy
the size of the problem is defined as $N = m2^k$ where $k$ and $m$ are positive
integers and $m$ is a multiple of the block-size $\rho$. The idea is to do a
binary \textit{bottom-up} recursion of $k$ levels (see Figure
\ref{fig_other_methods}, right), where the $i$-th level has half the number of
blocks of the $(i-1)$-th level, but with doubled linear size. This method
requires an additional pass for computing the blocks at the diagonal (level
$k=0$ is a special one).  More details of how the grid is built and how blocks
are distributed are well explained in \cite{Ries:2009:TMI:1654059.1654069}.  In
the original work, the mapping of blocks to their respective locations at each
level is achieved by using a lookup table stored in constant memory.  In this
case, the lookup table is discarded and instead the mapping is done at runtime. 

The \textit{upper-triangular mapping} (UTM) was proposed by Avril \textit{et.
al.} \cite{AvrilGA12} for performing efficient collision detection on the GPU.
Given a problem size $N$ and a thread index $k$, its unique pair $(a, b)$ is
computed as $a = \lfloor \frac{-(2n + 1) + \sqrt{4n^2 - 4n - 8k +1}}{-2}
\rfloor$ and $b = (a+1) + k - \frac{(a-1)(2n-a)}{2}$. The UTM strategy uses the
idea of mapping threads explicitly to the upper-triangular matrix, making it a
\textit{thread space} map.

\section{Performance Results}
\label{sec_experimental_results}
The experimental design consists of measuring the performance of $\lambda(\omega)$
and compare it against the \textit{bounding box} (BB), \textit{rectangular box}
(RB) \cite{Jung2008}, the \textit{recursive partition} (REC)
\cite{Ries:2009:TMI:1654059.1654069} and \textit{upper-triangular mapping} (UTM)
\cite{AvrilGA12}. Three tests are performed to each strategy; (1) the dummy
kernel, (2) EDM and (3) Collision detection. Test (1) just writes the $(i,j)$
coordinate into a fixed memory location. The purpose of the dummy kernel is to
measure just the cost of the strategy and not the cost of the application
problem.  Test (2) consists of computing the Euclidean distance matrix (EDM)
using four features, \textit{i.e.,} $(x,y,z,w)$ where all of the data is
obtained from global memory.  Test (3) consists of performing collision
detection of $N$ spheres with random radius inside a unit box. The goal of this
last test is to measure the performance of the kernels using a shared memory
approach.

The reason why  these tests were chosen is because they are simple enough to
study their performance from a GPU map perspective and use different memory
access paradigms such as global memory and shared memory.  Based on these
arguments, it is expected that the performance results obtained by these three
tests can give insights on what would be the behavior for more complex problems
that fall into one of the two memory access paradigms.  Furthermore, the tests
have been ran on three different GPUs in order to check if the results vary
under older or newer GPU architectures. The details on the maximum number of
\textit{simultaneous blocks} (\textit{sblocks}) for each GPU used are listed in
Table \ref{table_hardware}. 
\begin{table}[ht!]
\caption{Hardware used for experiments.}
\begin{center}
\begin{tabular}{|c|r|r|r|r|r|}
\hline
Device	&	Model            & Architecture  & Memory & Cores & \textit{sblocks} \\
\hline
$GPU_1$	&	Geforce GTX 765M & GK106         & 2GB    & 768   & 80 \\
$GPU_2$	&	Geforce GTX 680  & GK104         & 2GB    & 1536  & 128 \\
$GPU_3$ &	Tesla K40        & GK110         & 12GB   & 2880  & 240 \\
\hline
\end{tabular}
\end{center}
\label{table_hardware}
\end{table}

Performance results for the dummy kernel, 4D-EDM and collision detection in 1D/3D
are presented in Figure \ref{fig_all_results}. Graphic plots the
performance of all four strategies as different dashed line colors, while the
symbol type indicates which GPU was chosen for that result. The performance of
each mapping strategy is given in terms of its improvement factor $I$ with
respect to the $BB$ strategy (\textit{i.e.,} the black and solid horizontal line
fixed at $I=1$). Values that are located above the horizontal line represent
actual improvement, while curves that fall below the horizontal line represent a
slowdown with respect to the $BB$ strategy.
\begin{figure*}[ht!]
\centering
\begin{tabular}{cc}
\includegraphics[scale=0.46]{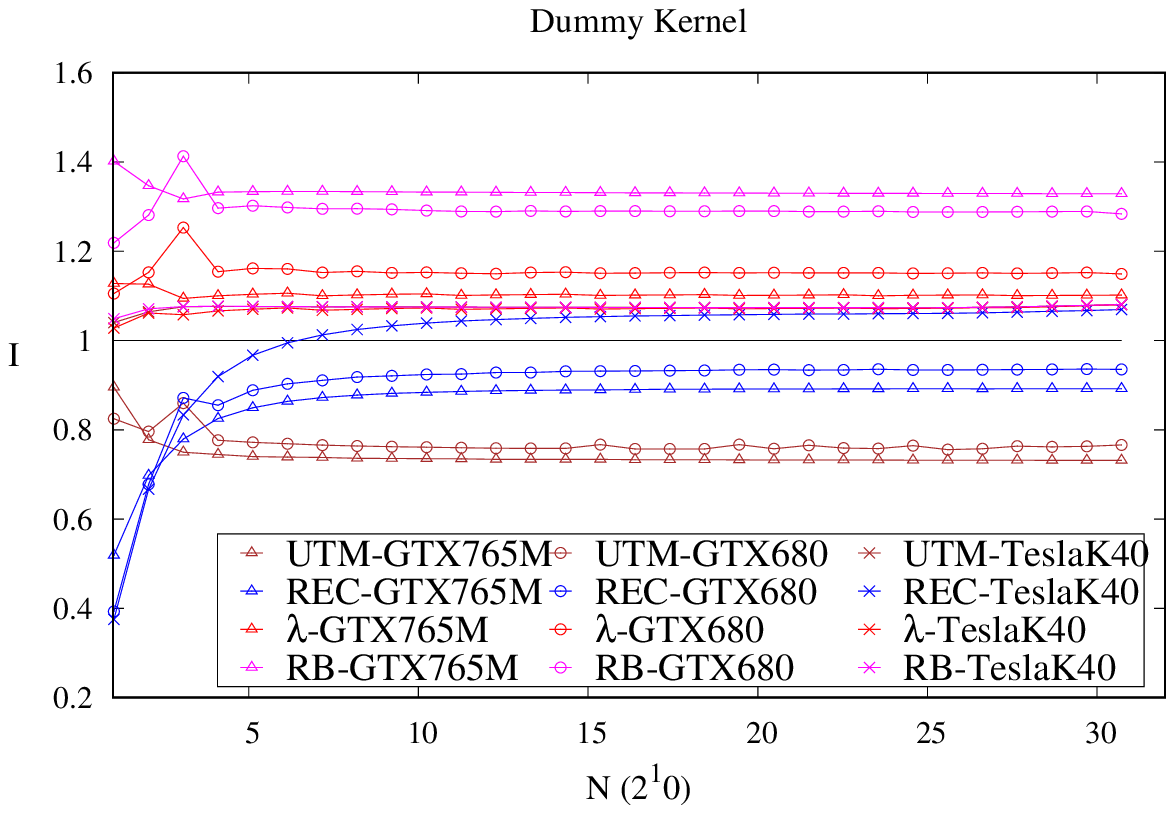} &
\includegraphics[scale=0.46]{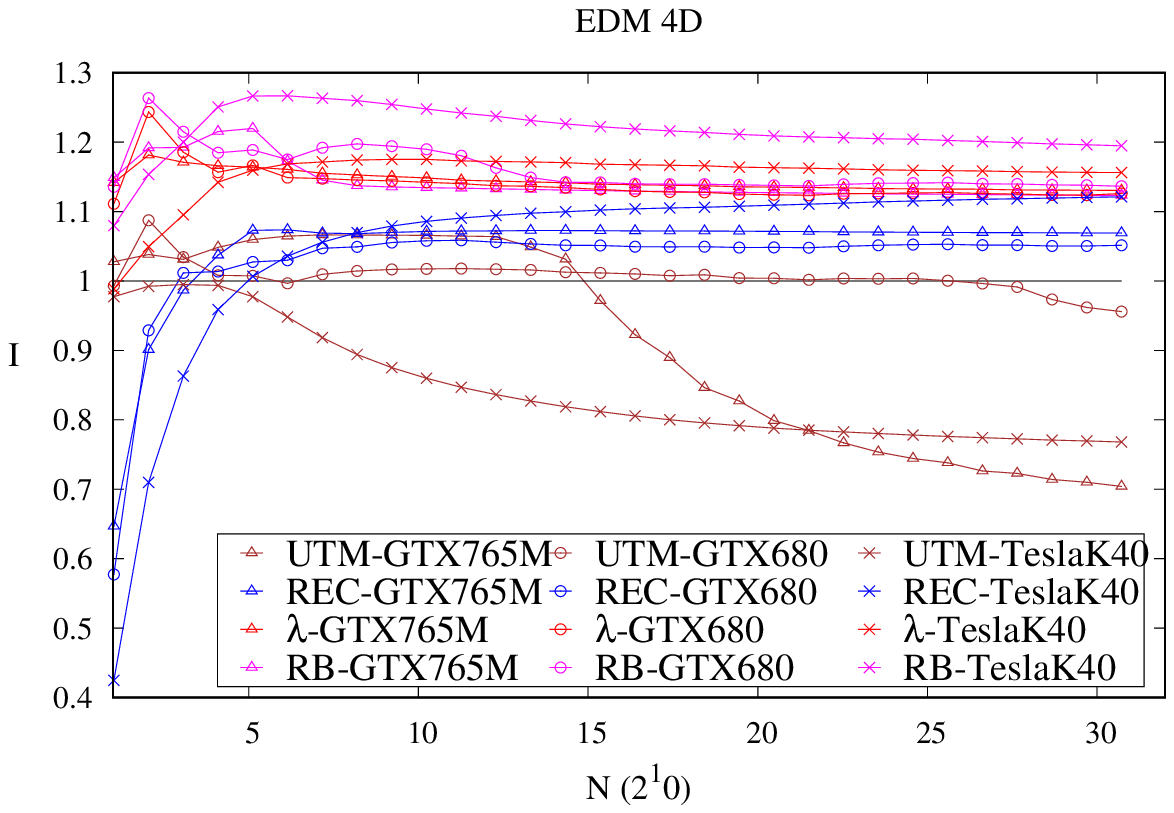}\\
\includegraphics[scale=0.46]{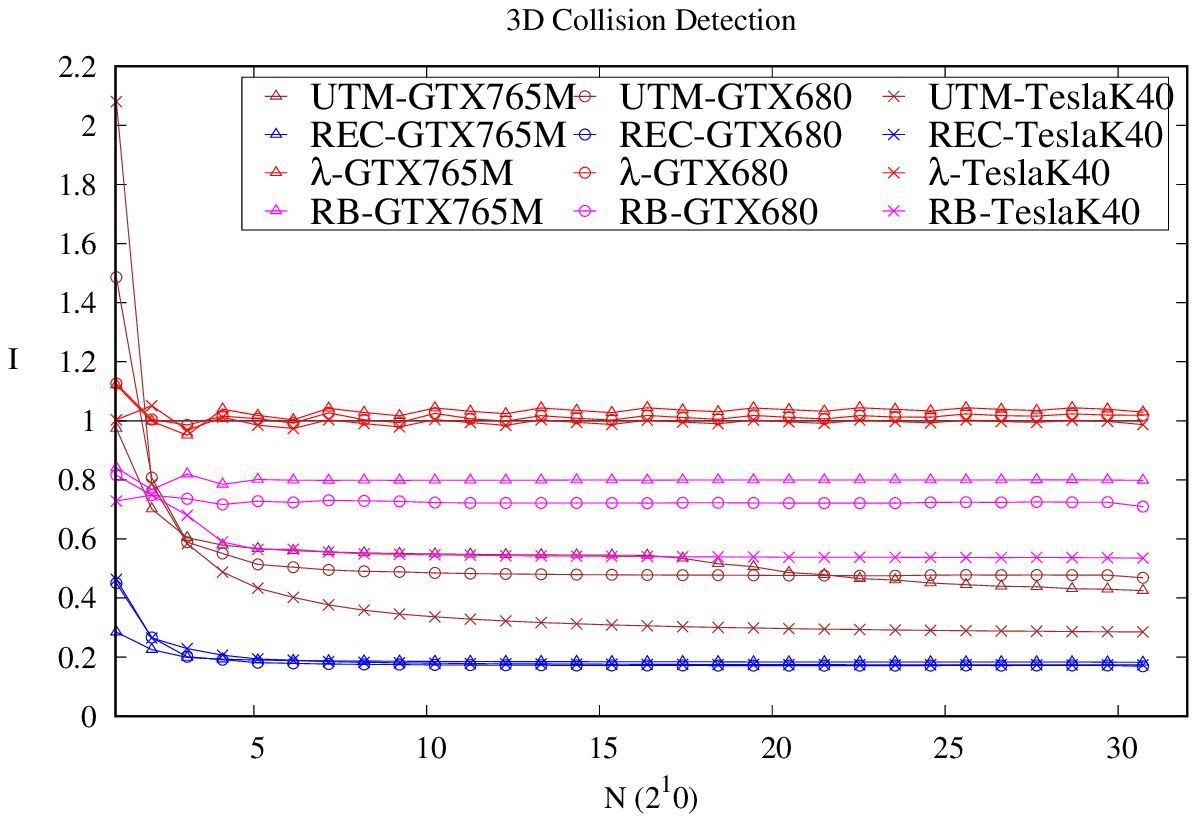} &
\includegraphics[scale=0.46]{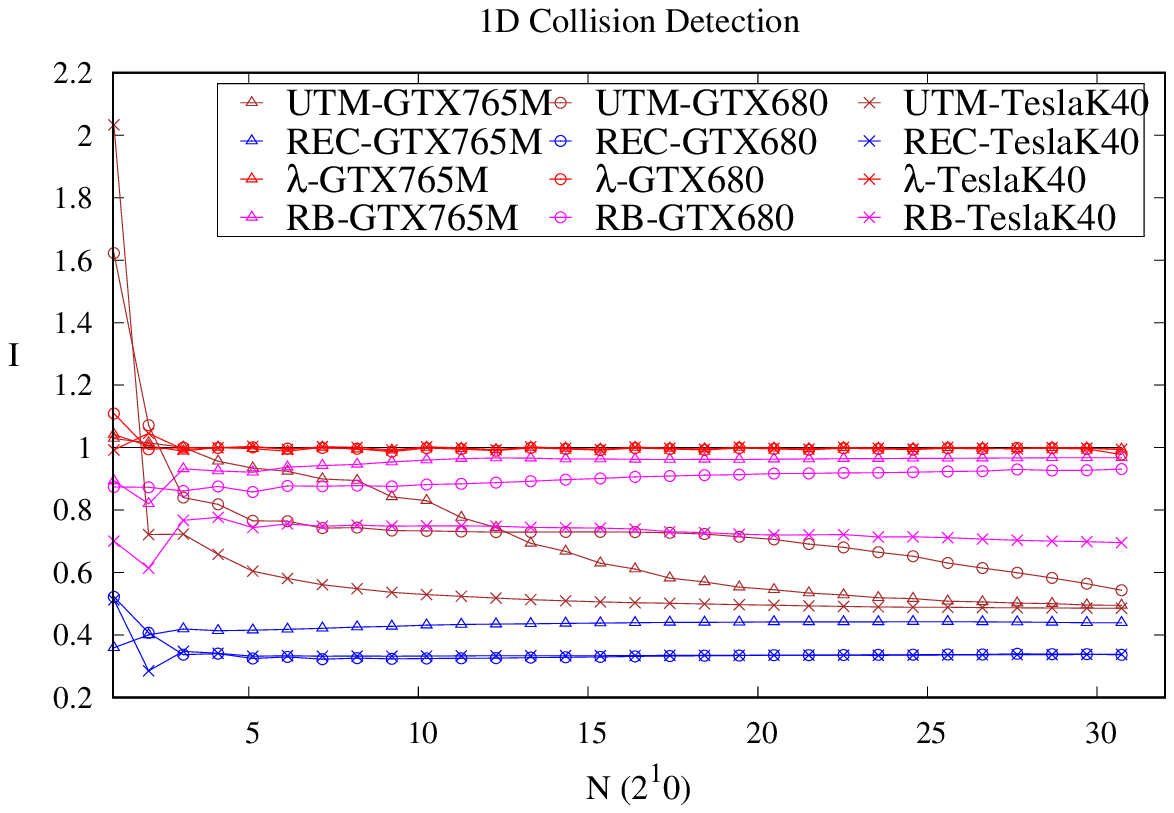}
\end{tabular}
\caption{Improvement factors for the dummy kernel, Euclidean distance matrix and
collision detection.}
\label{fig_all_results}
\end{figure*}
For the dummy kernel test the plots shows that the RB strategy is the fastest
one achieving up to 33\% of improvement with respect to BB when running on the
GTX 765M.  Map $\lambda(\omega)$ comes in the second place, achieving a stable
improvement of up to 18\% when running on the GTX 680.  The REC and UTM
strategies performed slower than BB for the whole range of $N$. We note that
this test running on the Tesla K40 does not show any clear performance
difference among the mapping strategies once $N > 15000$, as they all
converge to a 7\% of improvement with respect to BB.


For the EDM test, RB is again the fastest map achieving an improvement of up to
28\% with respect to BB when running on the Tesla K40 GPU.  In second place
comes $\lambda(\omega)$ with a stable improvement of 18\% and third the REC map with an
improvement that reaches up to 12\% for the largest $n$. The performance of the
UTM strategy was lower than BB and unstable for all GPUs. It is important to 
consider that UTM was designed to work in the range $n \in [0,3000]$, where
it actually does perform better than BB offering up to 4\% of improvement over
BB. For this test, performance differences among the strategies did manifest for
all GPUs, including the Tesla K40.

For the 3D collision detection test only $\lambda(\omega)$ manages to perform better than
BB, offering an improvement of up to 7\%. Indeed, the performance scenario
changes drastically in the presence of a different memory access pattern such as
shared memory; the RB strategy, which was the best in global memory, now
performs slower than BB.  The case is similar with the REC map which now
performs much slower than BB. It is important to mention that the UTM map is the only
strategy that cannot use a 2D shared memory pattern because the mapping works in
\textit{linear thread space}. At low $n$, UTM achieves a 100\% of improvement
because of the different memory approach used, thus it cannot be taken as a
practical improvement.  Furthermore, as $n$ grows, its performance is over passed by
the rest of the strategies that use shared memory.  For the case of 1D collision
detection, results are not so beneficial for the mapping strategies and in the
case of $\lambda(\omega)$ it is in the limit of being an improvement. The 1D results
show that in low dimensions the benefits of a mapping strategy can be not as
good as in higher ones.

\section{Extension to 3D Tetrahedrons}
\label{sec_extension3d}
In this section, the potential benefit of $\lambda(\omega)$ is considered for 3D
cases as an extension of the 2D approach. The three-dimensional analog of the 2D
triangle corresponds to the 3D discrete tetrahedron, which may be defined by $n$
triangular structures stacked and aligned at their right angle, where the $r$-th
trianglular layer contains $T_r^{2D} = r(r+1)/2$ elements. The total number of
elements for the full structure can be expressed in terms of the $n$ layers:
\begin{equation}
    T_n = \sum_{r=1}^{n} T_r^{2D} = \sum_{r=1}^{n}{\frac{r(r+1)}{2}}
\end{equation}
The sequence corresponds to the tetrahedral numbers, which can be
defined by 
\begin{equation}
    T_n = {{n+2}\choose{3}} = \sum_{r=1}^{n} T_r^{2D} = \frac{n(n+1)(n+2)}{6} \end{equation}

Similar to the 2D case, a canonical map $f(x) = x$ with a box-shaped parallel
space leads to an innefficiency as the number of unnecessary threads is in the
order of $O(n^3)$ as illustrated by Figure
\ref{fig_bounding-box-tetrahedron}.
\begin{figure}[ht!]
    \includegraphics[scale=0.15]{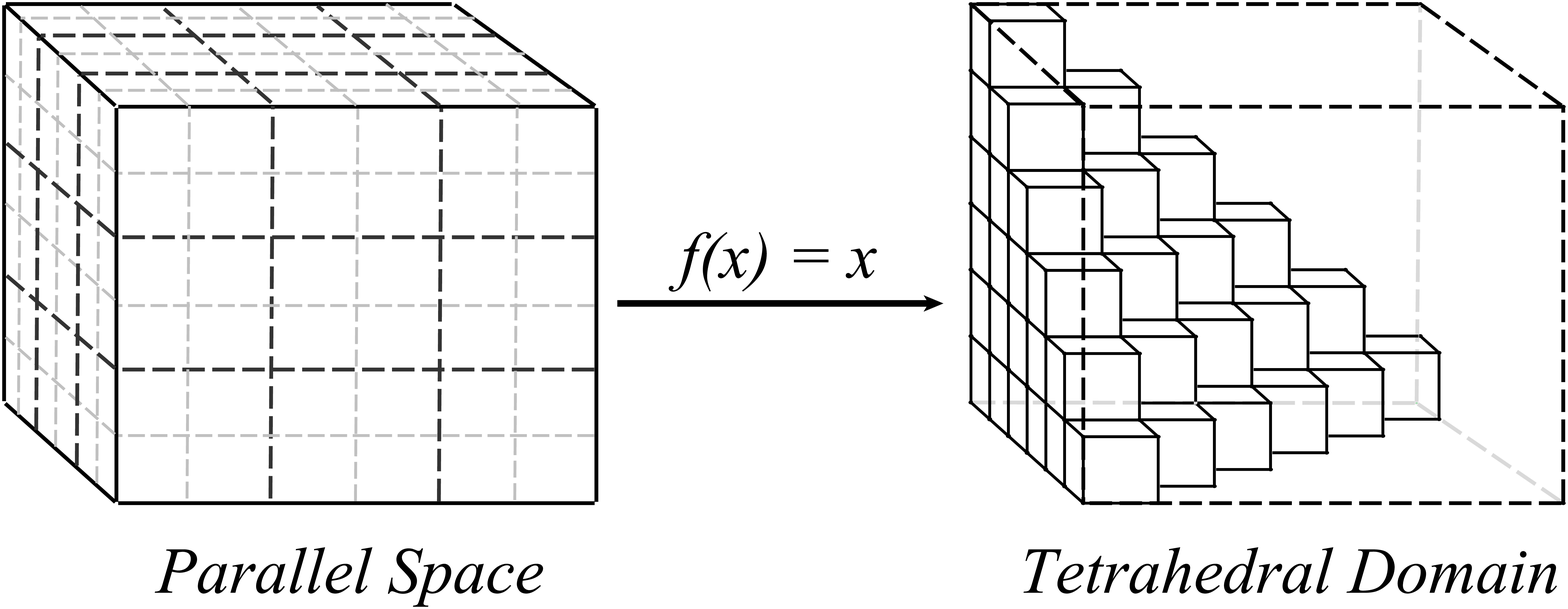}
    \centering
    \caption{A bounding-box approach produces $O(n^3)$ unnecessary
    threads.}
    \label{fig_bounding-box-tetrahedron}
\end{figure}

A more efficient approach can be formulated by considering how block indices can
map onto the tetrahedron.  More precisely, it is possible to redefine $\lambda$ as 
a map $\lambda(\omega):\mathbb{N} \rightarrow \mathbb{N}^3$ that works on the 
tetrahedral structure without loss of parallelism (see Figure \ref{fig_tetrahedron-map}). 
\begin{figure}[ht!]
    \includegraphics[scale=0.18]{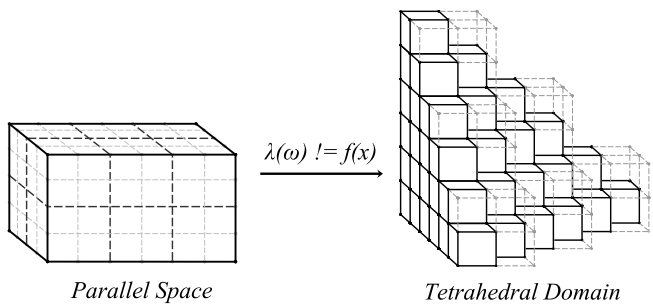}
    \centering
    \caption{Blocks of threads can use an extended version of $\lambda(\omega)$ to map from parallel space
    (left) onto the tetrahedral structure (right).}
    \label{fig_tetrahedron-map}
\end{figure}

The approach takes advantage of the fact that when using a linear enumeration of
blocks on the tetrahedron, the $\omega$ index of the first element of a 2D
triangular layer corresponds to a tetrahedral number $T_x$. Similar to the
previous two-dimensional row analysis, now that data elements that reside in the same
layer obey the following property
\begin{equation}
    \sum_{r=1}^{k} {r(r+1)/2} < \omega = \sum_{r=1}^x {r(r+1)/2} <
    \sum_{r=1}^{k+1} {r(r+1)/2}.
\end{equation}
When expressing $\omega$ as the tetrahedral number
\begin{equation}
    \omega = \sum_{r=1}^x {r(r+1)/2} = \frac{x(x+1)(x+2)}{6}.
\end{equation}
the $k$ component of the $(i,j,k)$ coordinate of block $\omega$ can be obtained
by solving the third order equation
\begin{equation}
    x^3 + 3x^2 + 2x -6\omega = 0
\end{equation}
and extracting the integer part of the root
\small
\begin{equation}
    x   = \frac{\sqrt[3]{\sqrt{729\omega^2 - 3} + 27\omega}}{3^{2/3}} +
    \frac{1}{\sqrt[3]{3} \sqrt[3]{ \sqrt{729\omega^2 - 3} + 27\omega}} - 1
\end{equation}
\normalsize
Once the value $k = \lfloor x \rfloor$ is computed, the 
$\omega_{2D}$ linear coordinate 
\begin{equation}
    \omega_{2D} = \omega - T_k 
\end{equation}
can be obtained as well, where $T_k = k(k+1)(k+2)/6$ is the tetrahedral number for the recently computed
$k$ value. With $\omega_{2D}$ computed, the $i$ and $j$ values of the block can
be computed using the two-dimensional version of $\lambda(\omega)$. Combining all three
sub-results, the tetrahedral map $\lambda(\omega)$ becomes
\begin{equation}
    \lambda(\omega) \mapsto (i,j,k) = \Big(\omega_{2D} - T_y^{2D}, \Big\lfloor\sqrt{\frac{1}{4} +
    2\omega_{2D}} - \frac{1}{2}\Big\rfloor, \lfloor v \rfloor\Big)
\end{equation}.

The blocks in parallel space can be organized on a cubic grid of side $\lceil
\sqrt[3]{T_n} \rceil$ in order to balance the number of elements on each
dimension, producing $n^2 \rho^3 \in o(n)$ unnecessary threads.
The potential improvement factor of the block-space map with respect to the
bounding box is 
\begin{equation}
I = \frac{\alpha n^3/\rho^3}{\gamma T_n/\rho^3} = \frac{6\alpha n^3}{\gamma (n^3 + 3n^2 + 2n)}
\end{equation}
where $\alpha$ is the cost of computing the block coordinate using the box
approach, while $\gamma$ is the cost of mapping blocks onto the tetrahedral map. 
In the infinite limit of $n$, the potential improvement becomes
\begin{equation}
    I_{n\to\infty} \sim \frac{6\alpha}{\gamma}
\end{equation}
and tells that in theory the tetrahedral map could be close to $6\times$ more
efficient for large $n$. However, such improvement can only be possible if $\gamma \sim
\alpha$, \textit{i.e.,} if the square and cubic roots can be computed fast
enough to be comparable to the cost of the bounding-box approach, which is very
unlikely to happen in practice. Nevertheless, there still can be valuable
improvement as long as $\gamma < 6\alpha$.

\section{Conclusions}
\label{sec_conclusions}

The mapping technique studied in this work may be a useful optimization for
GPU solutions of parallelizable problems that have a triangular or tetrahedral domain.  The map
proposed in this work, namely $\lambda(\omega)$, proved to be the fastest strategy
(with $7\%$ of improvement over the \textit{bounding-box}) when dealing with
shared memory scenarios that require locality, which is a computational pattern
often found in scientific and engineering problems where information of nearest neighbors 
is accessed. Since the map is performed in block-space, thread
locality is not compromised allowing to integrate the optimization with other
optimization techniques such as block-level data re-organization or coalesced
data sharing among threads. The implementation of $\lambda(\omega)$ is short in
code and totally detached from the problem, making it easy to adopt it as a
self-contained function with no side effects.

The performance of the map varies depending on which GPU is used and on the way
the map computations are performed. It is of high interest to study even more
optimized square root routines as they have a great impact
on the performance of $\lambda(\omega)$.

Extensions to the 3D tetrahedron are worth inspecting in more detail as they
have the potential of being up to $6\times$ more efficient regarding parallel
space. This improvement can traduce to a performance increase only if the cost
of cubic and square root computations is low enough to the same order of the
bounding box approach. Although tetrahedral domain problems are less frequently
found, they are still important in science when solving triplet-interaction
n-body problems. 
An interesting approach for the implementation of the 3D
map would be to reconsider relaxing the condition of allowing extra data and 
introduce a type of \textit{succint lookup table} of $o(T_n)$ combined with 
coordinate computations in order to balance and overlap the use of numerical and
memory operations.

\section*{Acknowledgment} This work was supported by the FONDECYT project
N$^o$ 3160182 and the Nvidia GPU Research Center from University of Chile. 
\bibliographystyle{elsarticle-num}
\bibliography{linearmap}
\end{document}